     \def\section{\@startsection{section}{1}%
     \z@{.7\linespacing\@plus\linespacing}{.5\linespacing}%
     {\bfseries
     \centering
     }}
     \def\@secnumfont{\bfseries}
\newtheorem{theorem}{Theorem}[section]
\newtheorem{proposition}[theorem]{Proposition}
\theoremstyle{definition}
\theoremstyle{remark}
\newtheorem{remark}[theorem]{Remark}
\numberwithin{equation}{section}
\begin{document}

\title[Modelling Illiquid Stocks: Asymptotic Methods]{Modelling Illiquid Stocks Using Quantum Stochastic Calculus: Asymptotic Methods}

\author{Will Hicks}
\address{Will Hicks: Memorial University of Newfoundland, St. John's, NL A1C 5S7, Canada}
\email{williamh@mun.ca}

\subjclass[2010] {Primary 81S25; Secondary 35C20, 91G20}

\keywords{Asymptotic Expansions, Quantum Stochastic Calculus, Quantum Black-Scholes}

\begin{abstract}
This article investigates the Fokker-Planck equations that arise from the application of quantum stochastic calculus to the modelling of illiquid financial markets, using asymptotic methods. We present a power series solution for quantum stochastic processes with a non-zero conservation process. Whilst the series in question are in general divergent, we show they can be used to approximate solutions for longer time frames, and provide estimates for the relative error on the higher order terms.
\end{abstract}

\maketitle


\section{Introduction}

The analysis in \cite{AB} shows how to apply the methods of quantum stochastic calculus developed in \cite{HP}, to derive a general form for a Quantum Black Scholes equation.

The article \cite{Hicks6} provides an example of where the underlying quantum stochastic process incorporates a non-zero conservation process in addition to the creation \& annihilation processes. The resulting random motion of the underlying traded asset price shows non-Gaussian moments, and the associated Fokker-Planck equation is a linear partial differential equation with an infinite number of terms (see also \cite{Hicks3}).

In this article we investigate ways in which we can generate asymptotic solutions to the models developed in \cite{Hicks6}. The resulting solutions, in connection with the discussion in \cite{Hicks6}, can be used in the study of the dynamics of illiquid stocks with a non-zero bid-offer spread.

We start in section \ref{Theory} by giving an overview of the theoretical background, before deriving the asymptotic solution in section \ref{Sln}. In section \ref{Conv}, we prove key results regarding the convergence of the solution, and finally investigate some numerical examples in section \ref{NumSln}.
\section{Theoretical Background}\label{Theory}
In this section, we summarise the analysis presented in \cite{Hicks6} in order to provide the necessary context for the modelling problem that we address using the asymptotic series in section \ref{Sln}.

This specific problem provides an example where modelling using a non-zero conservation process, and by extension the asymptotic methods presented in this article, can be useful.

Note, this section is intended as an overview of the background regarding the final partial differential equation:
\begin{align}\label{the pde}
\frac{\partial p}{\partial t}&=\sigma^2\sum_{k\geq 1}\frac{\epsilon^{(2k-2)}}{(2k)!}\frac{\partial^{2k} p}{\partial x^{2k}}+\sigma^2\eta\sum_{k\geq 2}\frac{(-\epsilon)^{(2k-3)}}{(2k-1)!}\frac{\partial^{(2k-1)} p}{\partial x^{(2k-1)}}
\end{align}
Readers interested only in the asymptotic methods used to derive a solution to equation \ref{the pde}, can skip to section \ref{Sln}. Alternatively, for more detail see \cite{Hicks6}.
\subsection{Hilbert Space Representation of the Financial Market:}
Many models of the financial market consider a single market price for each tradable asset as the random variable of interest. Furthermore, if one wishes to apply the methods of quantum probability, one could consider an observable $X$, acting on $\mathcal{H}\in L^2(\mathbb{R})$:
\begin{align*}
(X\psi)(x)&=x\psi(x)\text{, for }\psi(x)\in L^2(\mathbb{R})
\end{align*}
In this article, we consider instead a market made up of a number of buyers who would like to buy at the lower {\em bid} price, and sellers who would like to sell at the higher {\em offer} price.

Therefore, we consider the case where there are 2 state variables. One coordinate: $x$, that represents the mid-price for the traded asset, and a second coordinate: $\epsilon$ that represents the width of the bid-offer spread.

We assume that the state of the market for potential buyers (and sellers) is determined by wave functions in the Hilbert space of complex valued square integrable functions on $\mathbb{R}^2$:
\begin{align}\label{psi_o,b}
\psi_o(x,\epsilon)\in L^2(\mathbb{R}^2,\mathbb{C})\\
\psi_b(x,\epsilon)\in L^2(\mathbb{R}^2,\mathbb{C})\nonumber
\end{align}
The overall market state is defined by the direct sum:
\begin{align}\label{Market_def}
\psi &=\psi_o\oplus\psi_b\\
\psi_o(x,\epsilon)\text{, }\psi_b(x,\epsilon) &\in L^2(\mathbb{R}^2)\nonumber
\end{align}
For $\phi=\phi_1\oplus\phi_2$ and $\psi=\psi_1\oplus\psi_2$, we have:
\begin{align*}
\langle\phi|\psi\rangle &=\langle\phi_1|\psi_1\rangle+\langle\phi_2|\psi_2\rangle
\end{align*}
So it follows that the normalisation condition becomes:
\begin{align}\label{norm_cond}
||\psi_0\oplus\psi_b||^2&=||\psi_o||^2+||\psi_b||^2\nonumber\\
&=1
\end{align}
For example, we may have an even balance of buyers \& sellers, in which case:
\begin{align*}
||\psi_o||^2=||\psi_b||^2=1/2
\end{align*}
In general, as long as the normalization condition, given by equation \ref{norm_cond}, is met then we can have:
\begin{itemize}
\item More buyers than sellers: $||\psi_b||^2>||\psi_o||^2$.
\item More sellers than buyers: $||\psi_o||^2>||\psi_b||^2$.
\end{itemize}
\begin{remark}
Going forward, we make use of matrix notation, so that for $\psi\in \mathcal{S}(\mathbb{R})\oplus\mathcal{S}(\mathbb{R})$ we write:
\begin{align*}
|\psi\rangle &=\begin{pmatrix}\psi_0\\ \psi_b\end{pmatrix}\\
A\psi&=\begin{pmatrix} A_{11}& A_{12}\\A_{21}&A_{22}\end{pmatrix}\begin{pmatrix}\psi_0\\ \psi_b\end{pmatrix}
\end{align*}
Note that, we also apply the following abuse of notation, by writing:
\begin{align*}
\langle\psi|&=\begin{pmatrix}\psi_o&\psi_b\end{pmatrix}
\end{align*}
So that we write:
\begin{align*}
E^{\psi}[A]&=\langle\psi|A|\psi\rangle\\
&=\begin{pmatrix}\psi_o&\psi_b\end{pmatrix}\begin{pmatrix} A_{11}& A_{12}\\A_{21}&A_{22}\end{pmatrix}\begin{pmatrix}\psi_0\\ \psi_b\end{pmatrix}\\
&=\langle\psi_o|A_{11}|\psi_o\rangle+\langle\psi_o|A_{12}|\psi_b\rangle+\langle\psi_b|(A_{21}|\psi_o\rangle+\langle\psi_b|A_{22}|\psi_b\rangle
\end{align*}
\end{remark}
We define the price operator:
\begin{align}\label{X_com}
X=\begin{pmatrix} x+\epsilon/2&0\\0&x-\epsilon/2\end{pmatrix}
\end{align}
so that if we have:
\begin{align}\label{sellers}
|\psi\rangle&=\begin{pmatrix}\psi_o(x,\epsilon)\\0\end{pmatrix}
\end{align}
We get:
\begin{align*}
E^{\psi}[X]&=\begin{pmatrix}\psi_o(x,\epsilon)&0\end{pmatrix}\begin{pmatrix} x+\epsilon/2&0\\0&x-\epsilon/2\end{pmatrix}\begin{pmatrix}\psi_0(x,\epsilon)\\0\end{pmatrix}\\
&=\int_{\mathbb{R}^2}(x+\epsilon/2)||\psi_o(x,\epsilon)||^2dxd\epsilon\\
&=x_o
\end{align*}
Similarly, if we have:
\begin{align}\label{buyers}
|\psi\rangle&=\begin{pmatrix}0\\\psi_b(x,\epsilon)\end{pmatrix}
\end{align}
We get:
\begin{align*}
E^{\psi}[X]&=\begin{pmatrix}0&\psi_b(x,\epsilon)\end{pmatrix}\begin{pmatrix} x+\epsilon/2&0\\0&x-\epsilon/2\end{pmatrix}\begin{pmatrix}0\\ \psi_b(x,\epsilon)\end{pmatrix}\\
&=\int_{\mathbb{R}^2}(x-\epsilon/2)||\psi_o(x,\epsilon)||^2dxd\epsilon\\
&=x_b
\end{align*}
\subsection{Introducing a Quantum Stochastic Process:}
We introduce randomness to \ref{X_com} using the approach outlined in \cite{AB} (see also \cite{HP}), we take the tensor product of $\mathcal{H}$ with the symmetric Fock space: $\mathcal{H}\otimes\Gamma(L^2(\mathbb{R}^+;\mathbb{C}))$, and use a unitary time evolution operator to build the price operator at $t=T$.

If the price operator at $t=0$ is written: $X\otimes\mathbb{I}$, then the operator at $t=T$ is given by: $j_T(X)=U_T^*(X\otimes\mathbb{I})U_T$. $U_t$ is defined by the process (see \cite{HP} proposition 7.1):
\begin{align}\label{U_QSP}
dU_t=-\bigg(\Big(iH+\frac{L^*L}{2}\Big)\otimes dt+L^*S\otimes dA_t-L\otimes dA^{\dagger}_t+(\mathbb{I}-S)\otimes d\Lambda_t\bigg)U_t
\end{align}
Whereby $H,L$, and $S$ act on $\mathcal{H}$, and $dA_t,dA^{\dagger}_t$, and $d\Lambda_t$ act on the Fock space. By writing out (see \cite{HP} Theorem 4.5):\begin{align*}
dj_t(X) &=d(U_t^*(X\otimes\mathbb{I})U_t)\\
&=dU_t^*(X\otimes\mathbb{I})U_t+U_t^*(X\otimes\mathbb{I})dU_t+dU_t^*(X\otimes\mathbb{I})dU_t
\end{align*}
and using It{\^ o} multiplication: Table \ref{ito_table} (see \cite{HP}), we can define a stochastic process for $dj_t(X)$, and $dj_t(X_t)^k, k\geq 2$:
\begin{table}
\centering
\begin{tabular}{p{1cm}|p{1cm}|p{1cm}|p{1cm}|p{1cm}}
-&$dA^{\dagger}_t$&$d\Lambda_t$&$dA_t$&$dt$\\
\hline
$dA^{\dagger}_t$&0&0&0&0\\
$d\Lambda_t$&$dA^{\dagger}_t$&$d\Lambda_t$&0&0\\
$dA_t$&$dt$&$dA_t$&0&0\\
$dt$&0&0&0&0\\
\end{tabular}
\caption{Ito multiplication operators for the basic operators of quantum stochastic calculus.}\label{ito_table}
\end{table}
\begin{align}\label{dX}
dj_t(X)&=j_t(\alpha^{\dagger})dA^{\dagger}_t+j_t(\alpha)dA_t+j_t(\lambda)d\Lambda_t+j_t(\theta)dt\\
k\geq 2:dj_t(X)^k &=j_t(\lambda^{k-1}\alpha^{\dagger}) dA^{\dagger}_t+j_t(\alpha\lambda^{k-1}) dA_t+j_t(\lambda^k) d\Lambda_t+j_t(\alpha\lambda^{k-2}\alpha^{\dagger}) dt\nonumber\\
\theta &=i[H,X]-\frac{1}{2}\Big(L^*LX+XL^*L-2L^*XL\Big)\nonumber\\
\alpha &=[L^*,X]S\nonumber\\
\alpha^{\dagger} &=S^*[X,L]\nonumber\\
\lambda &=S^*XS-X\nonumber
\end{align}
In order to proceed we first set the system Hamiltonian $H=0$, so that the time evolution of the operator: $X$ arises only from the random noise introduced into the symmetric Fock space. If we then set:
\begin{align}\label{L_op}
L&=\begin{pmatrix} -i\sigma\partial_x&0\\0&-i\sigma\partial_x\end{pmatrix}\\
S&=\mathbb{I}\nonumber
\end{align}
Then we end up with a Gaussian process for $j_t(X)$:
\begin{align}\label{dX_Gaussian}
dj_t(X) &=\begin{pmatrix}0&i\sigma\\-i\sigma&0\end{pmatrix}dA_t+\begin{pmatrix}0&i\sigma\\-i\sigma&0\end{pmatrix}dA^{\dagger}_t\\
dj_t(X)^2 &=\Bigg(\begin{pmatrix}\sigma^2&0\\0&\sigma^2\end{pmatrix}\Bigg)dt\nonumber\\
dj_t(X)^k &=0\text{, }k\geq 3\nonumber
\end{align}
By setting instead:
\begin{align}\label{ext_S}
S(\theta) &=\begin{pmatrix} \cos(\theta)&-\sin(\theta)\\\sin(\theta)&\cos(\theta)\end{pmatrix}\text{, with }\theta=\pi/2
\end{align}
We get:
\begin{align*}
\lambda&=S^*XS-X\\
&=\begin{pmatrix} x+\cos(\pi)\epsilon/2 & -\sin(\pi)\epsilon/2\\-\sin(\pi)\epsilon/2 & x-\cos(\pi)\epsilon/2\end{pmatrix}-\begin{pmatrix} x+\epsilon/2 & 0\\0&x-\epsilon/2\end{pmatrix}\\
&=\begin{pmatrix} x-\epsilon/2&0\\0&x+\epsilon/2\end{pmatrix}-\begin{pmatrix} x+\epsilon/2 & 0\\0&x-\epsilon/2\end{pmatrix}\\
&=\begin{pmatrix} -\epsilon&0\\0&\epsilon\end{pmatrix}
\end{align*}
Which in turn leads to:
\begin{align}\label{dX_S}
dj_t(X) &=\begin{pmatrix}0&i\sigma\\-i\sigma&0\end{pmatrix}dA_t+\begin{pmatrix}0&i\sigma\\-i\sigma&0\end{pmatrix}dA^{\dagger}_t+j_t\begin{pmatrix}-\epsilon & 0\\0&\epsilon\end{pmatrix}d\Lambda_t\\
dj_t(X)^k &=\Bigg(\begin{pmatrix}\sigma^2&0\\0&\sigma^2\end{pmatrix}j_t\begin{pmatrix}-\epsilon & 0\\0&\epsilon\end{pmatrix}^{k-2}\Bigg)dt+\Bigg(\begin{pmatrix}0&i\sigma\\-i\sigma&0\end{pmatrix}j_t\begin{pmatrix}-\epsilon & 0\\0&\epsilon\end{pmatrix}^{k-1}\Bigg)dA_t\nonumber\\
&+\Bigg(j_t\begin{pmatrix}-\epsilon & 0\\0&\epsilon\end{pmatrix}^{k-1}\begin{pmatrix}0&i\sigma\\-i\sigma&0\end{pmatrix}\Bigg)dA^{\dagger}_t+\Bigg(j_t\begin{pmatrix}-\epsilon & 0\\0&\epsilon\end{pmatrix}\Bigg)^kd\Lambda_t\nonumber
\end{align}
We model the derivative price as an operator valued function of $j_t(X)$: $V(j_t(X),t)$, and expand as a power series. Then applying the Ito relations from table \ref{ito_table}, with \ref{dX_S}, we get:
\begin{align*}
dV &=\bigg(\frac{\partial V}{\partial t}+\sum_{k\geq 2}\frac{1}{k!}\frac{\partial^k V}{\partial x^k}\begin{pmatrix}\sigma^2&0\\0&\sigma^2\end{pmatrix}\begin{pmatrix}-\epsilon & 0\\0&\epsilon\end{pmatrix}^{k-2}\bigg)dt\\
&+\bigg(\sum_{k\geq 1}\frac{\partial^k V}{\partial x^k}\begin{pmatrix}0&i\sigma\\-i\sigma&0\end{pmatrix}\begin{pmatrix}-\epsilon & 0\\0&\epsilon\end{pmatrix}^{k-1}\bigg)dA_t\nonumber\\
&-\bigg(\sum_{k\geq 1}\frac{\partial^k V}{\partial x^k}\begin{pmatrix}-\epsilon & 0\\0&\epsilon\end{pmatrix}^{k-1}\begin{pmatrix}0&i\sigma&0\\-i\sigma&0\end{pmatrix}\bigg)dA^{\dagger}_t\nonumber\\
&+\bigg(\sum_{k\geq 1}\frac{\partial^k V}{\partial x^k}\begin{pmatrix}0&i\sigma\\-i\sigma&0\end{pmatrix}\begin{pmatrix}-\epsilon & 0\\0&\epsilon\end{pmatrix}^{k}\begin{pmatrix}0&i\sigma\\-i\sigma&0\end{pmatrix}\bigg)d\Lambda_t\nonumber
\end{align*}
Taking expectations over the tensor product of the system quantum state vector: $\psi$, and the symmetric Fock space vector: $\varepsilon$, before equating to zero, we find that:
\begin{align*}
E^{(\psi\otimes\varepsilon)}\bigg[\frac{\partial V}{\partial t}+\sum_{k\geq 2}\frac{1}{k!}\frac{\partial^k V}{\partial x^k}\begin{pmatrix}\sigma^2&0\\0&\sigma^2\end{pmatrix}\begin{pmatrix}(-\epsilon)^{k-2}&0\\0&\epsilon^{k-2}\end{pmatrix}\bigg]&=0
\end{align*}
Setting:
\begin{align*}
|\psi\rangle&=\begin{pmatrix}\psi_0\\ \psi_b\end{pmatrix}
\end{align*}
We get:
\begin{align}\label{NG_QBS}
\frac{\partial V}{\partial t}&+\frac{\sigma^2}{2}\frac{\partial^2 V}{\partial x^2}+\sigma^2\sum_{k\geq 2}\frac{\epsilon^{(2k-2)}}{(2k)!}\frac{\partial^{2k}V}{\partial x^{2k}}\\
&+\sigma^2(||\psi_b||^2-||\psi_o||^2)\sum_{k\geq 2}\frac{(-\epsilon)^{(2k-3)}}{(2k-1)!}\frac{\partial^{(2k-1)}V}{\partial x^{(2k-1)}}=0\nonumber
\end{align}
In \cite{Hicks6}, it is shown that the Fokker-Planck equation associated to the Quantum Black-Scholes equation: \ref{NG_QBS}, is given by:
\begin{align}\label{NG_FP2}
\frac{\partial p}{\partial t}&=\sigma^2\sum_{k\geq 1}\frac{\epsilon^{(2k-2)}}{(2k)!}\frac{\partial^{2k} p}{\partial x^{2k}}+\sigma^2\eta\sum_{k\geq 2}\frac{(-\epsilon)^{(2k-3)}}{(2k-1)!}\frac{\partial^{(2k-1)} p}{\partial x^{(2k-1)}}\\
\eta&=\big(||\psi_o||^2-||\psi_b||^2\big)\nonumber
\end{align}
\section{Power Series Solution}\label{Sln}
To find a solution to \ref{NG_FP2}, we use a trial function:
\begin{align}\label{trial}
p(x,t)&=\frac{a_{00}}{\sqrt{t}}+\sum_{n\geq 1}\sum_{m=2}^{2n}\frac{a_{nm}}{\sqrt{t}}\bigg(\frac{x^{m}}{t^n}\bigg)
\end{align}
We substitute \ref{trial} into \ref{NG_FP2}, and attempt to match the right \& left hand side, thereby generating a sequence relation for the coefficients: $a_{nm}$.
\begin{proposition}\label{prop_case1}
Subject to convergence of the infinite series, equation \ref{trial} is a solution to \ref{NG_FP2}, if the coefficients $a_{nk}$ are given by:
\begin{align*}
a_{12}&=-\frac{a_{00}}{2\sigma^2}\\
\bigg(\frac{1}{2}-n\bigg)a_{(n-1)m}&=\sigma^2\sum_{l=1}^{\lfloor\frac{2n-m}{2}\rfloor}\binom{m+2l}{2l}\epsilon^{2l-2}a_{n(m+2l)}\\
&-\sigma^2\eta\sum_{l=2}^{\lfloor\frac{2n+1-m}{2}\rfloor}\binom{m+2l-1}{2l-1}\epsilon^{2l-3}a_{n(m+2l-1)}
\end{align*}
\end{proposition}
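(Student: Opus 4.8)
The plan is to substitute the trial function (\ref{trial}) directly into the Fokker-Planck equation (\ref{NG_FP2}) and match the coefficients of the monomials $x^{m}t^{-n-3/2}$ on the two sides; the hypothesis ``subject to convergence of the infinite series'' is precisely what licenses differentiating and rearranging the double series term by term, so the whole argument reduces to a formal manipulation followed by a comparison of coefficients.

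First I would record the left-hand side: since $\partial_t(a_{nm}x^{m}t^{-n-1/2})=-(n+\tfrac12)a_{nm}x^{m}t^{-n-3/2}$ (and likewise for the $a_{00}$ term with $n=m=0$), the coefficient of $x^{m}t^{-n-3/2}$ in $\partial p/\partial t$ is $-(n+\tfrac12)a_{nm}$, where by convention $a_{nm}:=0$ for indices outside the set appearing in (\ref{trial}). For the right-hand side I would use $\partial_x^{j}x^{m}=\tfrac{m!}{(m-j)!}x^{m-j}$: a summand $a_{\nu\mu}x^{\mu}t^{-\nu-1/2}$ of $p$, hit by $\partial_x^{2k}$ or $\partial_x^{2k-1}$, produces a term whose $t$-power is still $t^{-\nu-1/2}$, so matching it against $t^{-n-3/2}$ forces $\nu=n+1$. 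This one-level shift between the differentiated terms on the right and the time-derivative terms on the left is the reason the recurrence links $a_{(n-1)m}$ to the level-$n$ coefficients; keeping this shift straight, together with the accompanying drop of $2k$ (resp. $2k-1$) in the $x$-degree, is the only place where care is really needed.

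Carrying out the collection, the contribution of the first sum in (\ref{NG_FP2}) to the coefficient of $x^{m}t^{-n-3/2}$ works out to $\sigma^{2}\sum_{k}\tfrac{\epsilon^{2k-2}}{(2k)!}\tfrac{(m+2k)!}{m!}\,a_{n(m+2k)}$, and that of the $\eta$-sum to $\sigma^{2}\eta\sum_{k}\tfrac{(-\epsilon)^{2k-3}}{(2k-1)!}\tfrac{(m+2k-1)!}{m!}\,a_{n(m+2k-1)}$; the identity $\tfrac{1}{j!}\tfrac{(m+j)!}{m!}=\binom{m+j}{j}$ and the observation $(-\epsilon)^{2k-3}=-\epsilon^{2k-3}$ turn these into the two binomial sums of the statement. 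The summation ranges are then forced: $k\ge1$ and $k\ge2$ come straight from the two sums in (\ref{NG_FP2}), while the upper limits $\lfloor\tfrac{2n-m}{2}\rfloor$ and $\lfloor\tfrac{2n+1-m}{2}\rfloor$ are exactly the requirements $m+2k\le 2n$ and $m+2k-1\le 2n$ that keep the second index of the level-$n$ coefficient inside the range $2,\dots,2n$ of (\ref{trial}). Equating left and right monomial by monomial therefore yields precisely $(\tfrac12-n)a_{(n-1)m}=\sigma^{2}\sum_{l\ge1}\binom{m+2l}{2l}\epsilon^{2l-2}a_{n(m+2l)}-\sigma^{2}\eta\sum_{l\ge2}\binom{m+2l-1}{2l-1}\epsilon^{2l-3}a_{n(m+2l-1)}$. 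The base relation $a_{12}=-a_{00}/(2\sigma^{2})$ is just the case $n=1,\ m=0$: the left side is $-\tfrac12 a_{00}$, the first sum collapses to its single $l=1$ term $\sigma^{2}a_{12}$, and the $\eta$-sum is empty. The remaining cases $m\in\{0,1\}$ at higher $n$ carry $a_{(n-1)m}=0$ and reproduce the same identity, so they impose no extra constraint.

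The main obstacle is thus not conceptual but bookkeeping: tracking the two competing index shifts — the $t$-exponent shift from $\partial_t$ against the $x$-degree shift from $\partial_x^{j}$ — and checking that the finite inner sums align with the triangular index region $2\le m\le 2n$ of (\ref{trial}). The one genuinely analytic point, that the formal term-by-term operations are valid once the series and all its $x$-derivatives converge, is deliberately deferred to Section \ref{Conv}.
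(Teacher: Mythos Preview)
Your approach is essentially the same as the paper's: substitute the ansatz into both sides of \eqref{NG_FP2}, differentiate term by term, and equate coefficients of $x^{m}t^{-n-1/2}$ to obtain the recurrence, with the base case $n=1,m=0$ yielding $a_{12}=-a_{00}/(2\sigma^{2})$. The paper's proof contains one additional ingredient you omit: a short double induction showing that the recurrence can actually be \emph{solved}---starting at each level $n$ with $m=2n-2$ to determine $a_{n(2n)}$, then descending in $m$ so that at every step only $a_{n(m+2)}$ is unknown---which explains in what sense the coefficients are ``given by'' the stated relations; also, watch the index slip where you write $a_{n(m+2k)}$ for the coefficient of $x^{m}t^{-n-3/2}$ (by your own $\nu=n+1$ it should read $a_{(n+1)(m+2k)}$ before the relabelling $n\mapsto n-1$).
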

\begin{proof}
Inserting \ref{trial} into the left hand side of \ref{NG_FP2}, gives:
\begin{align*}
\frac{\partial p}{\partial t} &=\sum_{n\geq 0}\bigg(-\frac{1}{2}-n\bigg)\frac{1}{\sqrt{t}}\sum_{m=0}^{2n}a_{nm}\bigg(\frac{x^m}{t^{n+1}}\bigg)
\end{align*}
Similarly, inserting \ref{trial} into the right hand side of \ref{NG_FP2}, gives:
\begin{align*}
\sigma^2\sum_{k\geq 1}\frac{\epsilon^{(2k-2)}}{(2k)!}\frac{\partial^{2k} p}{\partial x^{2k}}&=\sum_{n\geq 0}\frac{\sigma^2}{\sqrt{t}}\sum_{m=2}^{2n}\sum_{l=1}^{\lfloor m/2\rfloor}\binom{m}{2l}\epsilon^{(2l-2)}a_{nm}\bigg(\frac{x^{(m-2l)}}{t^n}\bigg)\\
\sigma^2\eta\sum_{k\geq 2}\frac{(-\epsilon)^{(2k-3)}}{(2k-1)!}\frac{\partial^{(2k-1)} p}{\partial x^{(2k-1)}}&=\sum_{n\geq 0}\frac{\sigma^2\eta}{\sqrt{t}}\sum_{m=2}^{2n}\sum_{l=1}^{m}\binom{m}{2l-1}\epsilon^{(2l-3)}a_{nm}\bigg(\frac{x^{(m-2l+1)}}{t^n}\bigg)
\end{align*}
Combining the two, we get:
\begin{align}\label{core_eqn}
\sum_{n\geq 0}\big(-\frac{1}{2}-n\big)\frac{1}{\sqrt{t}}\sum_{m=0}^{n}a_{nm}\Big(\frac{x^{m}}{t^{n+1}}\Big)&=\sum_{n\geq 0}\frac{\sigma^2}{\sqrt{t}}\sum_{m=2}^{2n}\sum_{l=1}^{\lfloor m/2\rfloor}\binom{m}{2l}\epsilon^{(2l-2)}a_{nm}\Big(\frac{x^{(m-2l)}}{t^n}\Big)\nonumber\\
+\sum_{n\geq 0}\frac{\sigma^2\eta}{\sqrt{t}}&\sum_{m=2}^{2n}\sum_{l=2}^{\lfloor (m+1)/2\rfloor}\binom{m}{2l-1}\epsilon^{(2l-3)}a_{nk}\bigg(\frac{x^{(m-2l+1)}}{t^n}\bigg)
\end{align}
In order to derive a series to calculate the coefficients $a_{nm}$ we compare coefficients of:$\frac{x^m}{t^n}$ on each side of \ref{core_eqn}. From the left hand side we have:
\begin{align*}
\bigg(\frac{1}{2}-n\bigg)\frac{a_{(n-1)m}}{\sqrt{t}}\bigg(\frac{x^m}{t^n}\bigg)
\end{align*}
Similarly, from the right hand side we have:
\begin{align*}
&\frac{\sigma^2}{\sqrt{t}}\sum_{l=1}^{\lfloor\frac{2n-m}{2}\rfloor}\binom{m+2l}{2l}\epsilon^{2l-2}a_{n(m+2l)}\bigg(\frac{x^m}{t^n}\bigg)\\
&-\frac{\sigma^2\eta}{\sqrt{t}}\sum_{l=2}^{\lfloor\frac{2n+1-m}{2}\rfloor}\binom{m+2l-1}{2l-1}\epsilon^{2l-3}a_{n(m+2l-1)}\bigg(\frac{x^m}{t^n}\bigg)
\end{align*}
Therefore, equating the coefficients for both sides, we find:
\begin{align}\label{a_nm}
\bigg(\frac{1}{2}-n\bigg)a_{(n-1)m}&=\sigma^2\sum_{l=1}^{\lfloor\frac{2n-m}{2}\rfloor}\binom{m+2l}{2l}\epsilon^{2l-2}a_{n(m+2l)}\\
&-\sigma^2\eta\sum_{l=2}^{\lfloor\frac{2n+1-m}{2}\rfloor}\binom{m+2l-1}{2l-1}\epsilon^{2l-3}a_{n(m+2l-1)}\nonumber
\end{align}
Finally, we can solve for the coefficients: $a_{nk}$ in escalating powers of $t$. For $n=0$ we have:
\begin{align*}
-\frac{a_{00}}{2}=\sigma^2a_{12}
\end{align*}
We assume $a_{11}, a_{10}=0$, and that we know the coefficients for $a_{im}$ for all $m$, for $i\leq (n-1)$, and start with the equation involving: $a_{(n-1)(2n-2)}$. We have $m=2n-2$. Therefore, $\frac{2n-m}{2}=1$, and we have only one term on the right hand side of \ref{a_nm}:
\begin{align*}
\bigg(\frac{1}{2}-n\bigg)a_{(n-1)(2n-2)}&=\sigma^2\binom{2n}{2}a_{n(2n)}
\end{align*}
Therefore, from the value of $a_{(n-1)(2n-2)}$ we can calculate the value of  $a_{n(2n)}$.

Now assume, as well as knowing all the coefficients $a_{ij}$ with $i\leq (n-1)$, we know those with $i=n$ and $j=2n$ down to $j=m+4$. Then in equation \ref{a_nm}, there is only one unknown coefficient: $a_{n(m+2)}$.

Thus by the second induction, we can calculate the rest of the coefficients $a_{nj}$ for all $j$, and by the first induction, we can calculate all coefficients: $a_{ij}$, with $i\geq n$.
\end{proof}
\section{Convergence Properties}\label{Conv}
In order to apply proposition \ref{prop_case1}, we investigate the solution to the truncated partial differential equation. For example, with zero skew (number of buyers \& sellers is balanced) we would have:
\begin{align}\label{trunc}
\frac{\partial p_K}{\partial t}&=\sigma^2\sum_{k=1}^K\frac{\epsilon^{(2k-2)}}{(2k)!}\frac{\partial^{2k} p_K}{\partial x^{2k}}
\end{align}
\begin{proposition}\label{prop_trunc}
The power series $p_K(x,t)$, given by:
\begin{align}\label{trial2}
p_K(x,t)&=\frac{a_{00}}{\sqrt{t}}+\sum_{n\geq 1}\sum_{m=2(n-K+1)}^{2n}\frac{a_{nm}}{\sqrt{t}}\bigg(\frac{x^{m}}{t^n}\bigg)
\end{align}
is a solution to the truncated partial differential equation: \ref{trunc}, where the coefficients are given by:
\begin{align}\label{a_nm_trunc}
\bigg(\frac{1}{2}-n\bigg)a_{(n-1)m}&=\sigma^2\sum_{l=1}^{\min(\lfloor\frac{2n-m}{2}\rfloor,K)}\binom{m+2l}{2l}\epsilon^{2l-2}a_{n(m+2l)}
\end{align}
\end{proposition}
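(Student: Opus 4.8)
The plan is to proceed exactly as in the proof of Proposition \ref{prop_case1}: insert the trial series \ref{trial2} into the truncated equation \ref{trunc}, expand both sides as formal series in $x$ and $t^{-1}$, and compare the coefficient of $x^{m}t^{-n-1/2}$ on the two sides. This comparison should simultaneously produce the recurrence \ref{a_nm_trunc} and, through a nested induction, confirm that the coefficients it generates are supported on $2(n-K+1)\le m\le 2n$.

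Concretely, differentiating \ref{trial2} term by term gives $\partial_t p_K=\sum_{n\ge 0}(-\tfrac12-n)\,t^{-1/2}\sum_m a_{nm}x^m t^{-n-1}$, which is identical to the corresponding expression in Proposition \ref{prop_case1} apart from the narrower inner summation range. On the other side, the identity $\tfrac{1}{(2l)!}\partial_x^{2l}x^m=\binom{m}{2l}x^{m-2l}$ turns the right-hand side of \ref{trunc} into $\sum_{n\ge 0}\sigma^2 t^{-1/2}\sum_m\sum_{l=1}^{\min(\lfloor m/2\rfloor,\,K)}\binom{m}{2l}\epsilon^{2l-2}a_{nm}x^{m-2l}t^{-n}$. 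The only two new features relative to Proposition \ref{prop_case1} are the cap $l\le K$, forced by truncating the operator at $\partial_x^{2K}$, and the disappearance of the odd-order ($\eta$) sum, forced by the zero-skew hypothesis. Re-indexing $m\mapsto m+2l$ and matching the coefficient of $x^{m}t^{-n-1/2}$ then leaves $(\tfrac12-n)a_{(n-1)m}$ on the left and $\sigma^2\sum_{l}\binom{m+2l}{2l}\epsilon^{2l-2}a_{n(m+2l)}$ on the right, with $l$ running from $1$ to $\min(\lfloor\tfrac{2n-m}{2}\rfloor,K)$ — the first bound keeping $m+2l\le 2n$, the second being the operator cap — which is precisely \ref{a_nm_trunc}. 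The base case $n=1$, $m=0$ reads $-\tfrac12 a_{00}=\sigma^2 a_{12}$, so that $a_{12}=-a_{00}/(2\sigma^2)$, consistent with Proposition \ref{prop_case1}.

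The substantive step, and the one I expect to be the real obstacle, is to verify that the coefficients generated by \ref{a_nm_trunc} genuinely vanish for $m<2(n-K+1)$, so that the ansatz \ref{trial2} is self-consistent and the coefficient equations attached to the lower powers $x^{m}t^{-n-1/2}$ hold trivially (both sides zero). I would run the same two-stage induction as in Proposition \ref{prop_case1}: with all coefficients at levels $\le n-1$ assumed known and supported on the appropriate window, the $m=2n-2$ instance of \ref{a_nm_trunc} isolates $a_{n(2n)}$ (there $\lfloor\tfrac{2n-m}{2}\rfloor=1$), and then, descending in $m$ by $2$ at a time, each instance contributes exactly one new unknown $a_{n(m+2)}$ once the higher-degree coefficients at level $n$ are in hand. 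The point at which this process must stop — that is, the smallest $m$ below which every coefficient comes out zero — is governed by the interaction between the cap $l\le K$ and the induction hypothesis on the support at level $n-1$, and this is what should pin the lower limit at $2(n-K+1)$. Pushing this bookkeeping through carefully, in particular checking that none of the homogeneous equations (those with $a_{(n-1)m}=0$ on the left) forces a coefficient below that limit to be nonzero, is the delicate part; everything else is the same routine coefficient comparison already carried out for Proposition \ref{prop_case1}.
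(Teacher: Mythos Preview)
Your plan matches the paper's proof almost exactly: the paper simply observes that restricting to $l\le K$ in the recurrence of Proposition~\ref{prop_case1} yields \ref{a_nm_trunc}, and then sketches the same two-stage induction you describe for computing the coefficients level by level. Your treatment is in fact more careful than the paper's, which never explicitly raises the support-consistency question you flag as ``the delicate part.''

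That concern is well-founded, and unfortunately it is a genuine obstruction rather than mere bookkeeping. Take $K=2$ and run the recurrence \ref{a_nm_trunc}: one obtains $a_{24}=a_{00}/(8\sigma^{4})$, then $a_{22}=-\epsilon^{2}a_{24}\neq 0$, and at the next level $a_{34}=5\epsilon^{2}a_{00}/(48\sigma^{6})\neq 0$. Now examine the coefficient equation at $n=3$, $m=0$: it reads $(\tfrac12-3)a_{20}=\sigma^{2}a_{32}+\sigma^{2}\epsilon^{2}a_{34}$. Both $a_{20}$ (level $2$, window $[2,4]$) and $a_{32}$ (level $3$, window $[4,6]$) lie outside the stated support and are therefore set to zero by the ansatz \ref{trial2}, which forces $0=\sigma^{2}\epsilon^{2}a_{34}\neq 0$. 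Hence $p_K$ with support $2(n-K+1)\le m\le 2n$ is \emph{not} an exact solution of the truncated PDE \ref{trunc}; the residual coefficient equations fail by terms of order $\epsilon^{2K}$, which is consistent with the asymptotic interpretation of Proposition~\ref{prop_diverge} but not with a literal reading of the statement. The paper's proof does not address this point (and its closing index bounds ``down to $a_{(n-1)(2n-4-2K)}$'' and ``$a_{n(2n-2K-2)}$'' are themselves inconsistent with the declared window), so the gap lies in the proposition as stated rather than in your strategy.
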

\begin{proof}
Each term on the right hand side of \ref{a_nm}, derives from a partial derivative: $\partial^{2l}/\partial x^{2l}$. Equation \ref{a_nm_trunc}, follows by restricting $l\leq K$.

As described in the proof to proposition \ref{prop_case1}, we can proceed as follows:
\begin{itemize}
\item By setting $n=0,m=0$, we can calculate the value for $a_{12}$. Since $\frac{2n-m}{2}=1$, then this is the only non-zero term for $n=1$.
\item For $n=2$, we first calculate the value for $a_{24}$ by setting $m=2$.
\item If $\epsilon=0$, then equations \ref{a_nm} and \ref{a_nm_trunc} are the same. The only nonzero terms are of the form: $a_{n(2n)}$, and the resulting series is the Taylor expansion for the normal distribution probability density.
\item At each value for $n$, we start by setting, $m=2n-2$. This yields the value for $a_{n(2n)}$. Then proceeding as described, the known value for $a_{(n-1)m}$ determines the value for $a_{n(m+2)}$.
\item The left hand side of \ref{a_nm_trunc} gives $K$ equations: $a_{(n-1)(2n-2)}$, $a_{(n-1)(2n-4)}$, etc down to $a_{(n-1)(2n-4-2K)}$.
\item From these, we determine in turn the non-zero values for $a_{n(2n)}$ down to $a_{n(2n-2K-2)}$, as shown in the proof of proposition \ref{prop_case1}.
\end{itemize}
\end{proof}
For $K=1$, from \ref{prop_trunc}, we get:
\begin{align*}
p_1(x,t)&=\sum_{n\geq 0}\frac{a_{n(2n)}}{\sqrt{t}}\bigg(\frac{x^{2n}}{t^n}\bigg)\\
a_{n(2n)}&=-\frac{a_{(n-1)(2n-2)}}{(2n)\sigma^2}
\end{align*}
which, modulo a normalising constant, is the Taylor series expansion (about $x=0$) for the standard Gaussian probability density.

When one moves from $K=1$ to $K=2$, one includes an additional series:
\begin{align*}
\phi_2(x,t)&=\sum_{n\geq2}\frac{a_{n(2n-2)}}{\sqrt{t}}\bigg(\frac{x^{(2n-2)}}{t^n}\bigg)
\end{align*}
Similarly, when moving from $K=2$ to $K=3$ we add a third term:
\begin{align*}
\phi_3(x,t)&=\sum_{n\geq 3}\frac{a_{n(2n-4)}}{\sqrt{t}}\bigg(\frac{x^{(2n-4)}}{t^n}\bigg)
\end{align*}
Now, consider the power series \ref{trial2} as a function of the variable $y=1/t$:
\begin{align}\label{asymptotic}
p(x,y)&=a_{00}\sqrt{y}+\sum_{j\geq 1}\phi_j(x,y)\\
p_K(x,y)&=a_{00}\sqrt{y}+\sum_{j=1}^{K}\phi_j(x,y)\nonumber\\
\phi_j(x,y)&=\sum_{n\geq j}a_{n(2n-2j+2)}(x^{(2n-2j+2)})y^{n+0.5}\nonumber
\end{align}
\begin{proposition}\label{prop_diverge}
The series from proposition \ref{prop_trunc} is a divergent asymptotic expansion for the solution to equation \ref{NG_FP2}, with $\eta=0$.
\end{proposition}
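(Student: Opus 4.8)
The plan is to write down the exact solution of \ref{NG_FP2} with $\eta=0$ by Fourier transform, recognise the series \ref{asymptotic} as its large-$t$ asymptotic expansion, and then obtain divergence from the growth of the coefficients.

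\emph{Exact solution.} On the Fourier side the operator on the right of \ref{NG_FP2} with $\eta=0$ acts by multiplication by $\sigma^{2}\sum_{k\ge1}\frac{\epsilon^{2k-2}}{(2k)!}(i\xi)^{2k}=\frac{\sigma^{2}}{\epsilon^{2}}\big(\cos(\epsilon\xi)-1\big)$, which is the L\'evy--Khintchine exponent of a mean-zero compound Poisson process with jump rate $\sigma^{2}/\epsilon^{2}$ and symmetric jumps of size $\pm\epsilon$. Hence, for band-limited initial data, \ref{NG_FP2} is solved by $p(x,t)=\frac{1}{2\pi}\int_{-\pi/\epsilon}^{\pi/\epsilon}e^{i\xi x}\,\widehat p_{0}(\xi)\,\exp\!\big(\tfrac{\sigma^{2}t}{\epsilon^{2}}(\cos\epsilon\xi-1)\big)\,d\xi$, and the choice $\widehat p_{0}\equiv$ const gives the single-hump ``fundamental'' density that the trial function \ref{trial} models: its leading Laplace term is $a_{00}/\sqrt t$, and discarding the PDE terms of order $>1$ recovers the Gaussian kernel, consistent with the $K=1$ case.

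\emph{Asymptotic property.} Substituting $\xi=w/(\sigma\sqrt t)$ and writing $\cos u-1=-u^{2}/2+R(u)$ with $R(u)=\sum_{k\ge2}(-1)^{k}u^{2k}/(2k)!$ entire and $O(u^{4})$, the solution becomes $\tfrac{1}{2\pi\sigma\sqrt t}\int e^{-w^{2}/2}\,e^{iwx/(\sigma\sqrt t)}\exp\!\big(\tfrac{\sigma^{2}t}{\epsilon^{2}}R(\tfrac{\epsilon w}{\sigma\sqrt t})\big)\,dw$ over an interval expanding to $\mathbb{R}$ (the endpoints contribute only $O(e^{-2\sigma^{2}t/\epsilon^{2}})$ because $\cos\pi-1=-2$). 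Expanding the two non-Gaussian exponentials in powers of $t^{-1/2}$ and integrating term by term against the Gaussian weight is Laplace's method for integrals, giving $p(x,t)\sim\sum_{n\ge0}c_{n}(x)\,t^{-n-1/2}$ with $c_{n}$ an even polynomial of degree $2n$ and $p(x,t)-\sum_{n\le N}c_{n}(x)t^{-n-1/2}=O(t^{-N-3/2})$. Since the formal power-series solution of \ref{NG_FP2} is uniquely determined by the recursion \ref{a_nm_trunc} once $a_{00}$ is fixed (Proposition \ref{prop_trunc}), this series must be \ref{asymptotic}; regrouping it by the index $j$ and using that each $\phi_{j}$ converges then yields $p-p_{K}=O(y^{K+3/2})$, which is precisely the claim that \ref{asymptotic} is an asymptotic expansion of the solution as $y=1/t\to0^{+}$.

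\emph{Divergence, and the main obstacle.} The same computation shows $c_{n}(x)$ is a combination, with bounded combinatorial coefficients and powers of $\epsilon^{2}/\sigma^{2}$, of Gaussian moments $\int w^{r}e^{-w^{2}/2}\,dw\propto(r-1)!!$ with $r\le4n$; the contribution of the $n$-th power of the leading exponent term $\tfrac{\epsilon^{2}w^{4}}{24\sigma^{2}}$ is $\tfrac{1}{n!}\big(\tfrac{\epsilon^{2}}{24\sigma^{2}}\big)^{n}(4n-1)!!$ up to a constant, and $(4n-1)!!/n!\sim(16/e)^{n}n^{n}/\sqrt{\pi n}$ by Stirling, which beats $C^{n}$ for every $C$. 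Hence $c_{n}$, and the coefficients $a_{nm}$ entering \ref{asymptotic}, grow super-exponentially, so \ref{asymptotic} has zero radius of convergence in $y$ and diverges for every $x\ne0$ and every finite $t$, while still being asymptotic to $p$. The delicate point is that the remaining contributions to $c_{n}(x)$ (those carrying powers of the linear exponent term $iwx/(\sigma\sqrt t)$) come with oscillating signs, so one must show they do not cancel the super-exponential part; I would do this either by a short saddle-point estimate establishing that the sign-definite term dominates in modulus, or by observing that the $c_{n}(x)$ are, up to normalisation, the classical Edgeworth/Cram\'er coefficients of a lattice local limit theorem, whose Gevrey-$1$ growth and resulting divergence are standard — and as an alternative one can instead bound $|a_{n2}|$ from below directly from the chain of coefficients forced by \ref{a_nm_trunc}.
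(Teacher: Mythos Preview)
Your route is genuinely different from the paper's. The paper never produces an actual solution: it establishes the asymptotic property by the one-line observation that $\phi_j(x,y)=O(y^{j+1/2})$ while $y^{j}=O(\phi_{j-1}(x,y))$ as $y\to0$, hence $\phi_j=o(\phi_{j-1})$; and it proves divergence by contradiction directly from the recursion \ref{a_nm}, specialised at $m=2$, showing that if the coefficients tended to zero one would obtain $\max_k|a_{nk}|\ge C\,(n-\tfrac12)\,|a_{(n-1)2}|$ for large $n$, which is impossible. Your Fourier--Laplace argument buys considerably more: it identifies what the series is actually asymptotic \emph{to} (the lattice transition density with characteristic exponent $\sigma^2(\cos\epsilon\xi-1)/\epsilon^2$), recognises the expansion as an Edgeworth series for the associated random walk, and yields a genuine remainder estimate $p-p_K=O(y^{K+3/2})$ rather than merely the Poincar\'e ordering of the $\phi_j$. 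The cost is heavier machinery and, for the paper's purposes, unnecessary probabilistic input.

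The divergence part, however, is not finished, and you flag this yourself. Exhibiting one super-exponential summand $\tfrac{1}{n!}(\epsilon^2/24\sigma^2)^n(4n-1)!!$ inside $c_n$ and then deferring the cancellation question to ``a short saddle-point estimate'' or to folklore on Gevrey growth of Edgeworth coefficients is precisely the step that has to be carried out, not pointed at. Note that even at $x=0$ the contributions to $c_n(0)$ do not all carry the same sign (the $k=3$ term of $R$ already enters with a minus), so dominance of the $(4n-1)!!$ piece requires an honest bound on the subleading terms; this is doable but is real work, not a remark. The quickest closure is the one you mention last --- bounding $|a_{n2}|$ from below directly via the recursion --- and that is exactly the paper's argument; it sidesteps the cancellation issue entirely because it works coefficient-by-coefficient rather than through an integral representation.
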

\begin{remark}\label{its_ok}
In this proposition, we show that the series from proposition \ref{prop_trunc} is an asymptotic expansion in the sense of definition 10.1.1 from \cite{Dettman}. That is we show that in equation \ref{asymptotic}, we have:
\begin{align*}
\phi_j(x,y)&=o(\phi_{j-1}(x,y))\text{, as }y\rightarrow 0
\end{align*}
Thus, for a fixed (and arbitrarily high) value for $K$, the truncation error (from ignoring $\phi_j(x,y)$ for $j\geq K+1$) tends to zero for $y\rightarrow 0$. In other words, the approximation becomes more and more accurate for higher values of $t$.

However, for a fixed value of $x$ and $t$, the series diverges as $K\rightarrow\infty$. In section \ref{NumSln}, we show that in practical applications it will be sufficient to include a small number of terms in approximating the solution. 

Proposition \ref{pade} is then crucial in the sense that this enables us to calculate a cut-off time (dependent on $x$), in order to ensure the approximation error remains below a specified level. The solution should then only be applied for times above this cutoff time.
\end{remark}
\begin{proof}[Proof of Proposition \ref{prop_diverge}]
We have from equation \ref{a_nm}, that:
\begin{align}\label{start}
\Big(\frac{1}{2}-n\Big)a_{(n-1)2}&=\sigma^2\sum_{l=1}^{n-1}\binom{2l+2}{2l}\epsilon^{2l-2}a_{n(2+2l)}
\end{align}
If $\sum_{j\geq 1}\phi_j(x,y)$ is a convergent series, then we must have: $\sum_{j\geq 1}\phi_j(1,1)$ is also a convergent series. Therefore, we have that: $\sum_{j\geq 1}\sum_{k=0}^{2j}a_{jk}$ also converges. Let us write the series by ordering the $a_{jk}$ first by $j$ and then by $k$. We write:
\begin{align*}
S_N=\sum_{n=1}^Nb_n
\end{align*}
Where $b_1=a_{00}$, $b_2=a_{10}$, $b_3=a_{11}$, $b_4=a_{12}$, etc. Since we assume that $S_N$ converges, we must have that $b_n\rightarrow 0$ as $n\rightarrow\infty$. Therefore, we can choose $N$ such that: $|b_m|<|b_n|$, for $n>N$ and $m>n$.

Therefore, we can choose $n>N$, such that:
\begin{align}\label{contra_1}
max_{(k\leq n)}a_{n(2k)}&=a_{n,max}\\
&<a_{(n-1)2}\nonumber
\end{align}
Now we have:
\begin{align*}
\sigma^2\sum_{l=1}^{n-1}\binom{2l+2}{2l}\epsilon^{2l-2}a_{n(2+2l)}&=\frac{\sigma^2}{\epsilon^4}\sum_{l=1}^{n-1}\binom{2l+2}{2l}\epsilon^{2l+2}a_{n(2+2l)}\\
&\leq\frac{\sigma^2}{\epsilon^4}\sum_{l=1}^{n-1}\Big\lvert\binom{2l+2}{2l}\epsilon^{2l+2}a_{n(2+2l)}\Big\rvert\\
&\leq\frac{a_{n,max}\sigma^2}{\epsilon^4}\sum_{l=1}^{n-1}\Big\lvert\binom{2l+2}{2l}\epsilon^{2l+2}\Big\rvert
\end{align*}
Now the series:
\begin{align*}
R_n(\epsilon)&=\sum_{l=1}^{n-1}\Big\lvert\binom{2l+2}{2l}\epsilon^{2l+2}\Big\rvert
\end{align*}
is a convergent series for $|\epsilon|<1$, by the ratio test. Therefore, we have for $n>N+1$:
\begin{align}\label{contra_2}
a_{n,max}&\geq\bigg(\frac{\epsilon^4\Big(n-\frac{1}{2}\Big)}{\sigma^2R_{\infty}(\epsilon)}\bigg)a_{(n-1)2}
\end{align}
However, for large enough $n$, we have that equation \ref{contra_2} contradicts equation \ref{contra_1}. Therefore the series: $\sum_{j\geq 1}\sum_{k=0}^{2j}a_{jk}$ is not convergent.

To show that $p_K(x,y)$ is asymptotic to $p(x,y)$ in equation \ref{asymptotic}, as $y\rightarrow 0$, note that for all $j\geq 1$ we have:
\begin{align*}
\phi_j(x,y)&=O(y^{j+0.5})\text{, as }j\rightarrow 0\\
y^j&=O(\phi_{j-1}(x,y))\text{, as }j\rightarrow 0
\end{align*}
Therefore as $y\rightarrow 0$ we have:
\begin{align*}
\phi_j(x,y)&\leq K_1y^{j+0.5}\text{, for some constant }K_1\\
y^j&\leq K_2\phi_{j-1}(x,y)\text{, for some constant }K_2
\end{align*}
So:
\begin{align*}
\phi_j(x,y)&\leq K_1K_2y^{0.5}\phi_{j-1}(x,y)
\end{align*}
Which in turn implies:
\begin{align*}
\phi_j(x,y)&=o(\phi_{j-1}(x,y))\text{, as }y\rightarrow 0
\end{align*}
\end{proof}
\begin{remark}
Note that, since $\epsilon^4=O(R_{\infty}(\epsilon))$ as $\epsilon\rightarrow 1$, the contradiction given by equation \ref{contra_2} is met at smaller values for $n$ as $\epsilon$ increases, and gets closer to $1$. Thus we expect more rapid divergence as $\epsilon$ gets larger (increases from $\epsilon=0$), and that the series will get closer to the Gaussian solution as $\epsilon\rightarrow 0$.
\end{remark}
We now show that, whilst the series given by equation \ref{asymptotic}:
\begin{align*}
S_K(x,y)&=\sum_{j=1}^K\phi_j(x,y)
\end{align*}
is divergent for large $x$, and $y$ (small $t$), each individual term: $\phi_j(x,y)$ does converge for all $x$, and $y$.
\begin{proposition}\label{prop_converge}
The series defined by:
\begin{align*}
\phi_j(x,y)&=\sum_{n\geq j}a_{n(2n-2j+2)}(x^{(2n-2j+2)})y^{n+0.5}
\end{align*}
converges for all $x$ and $y$.
\end{proposition}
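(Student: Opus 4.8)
The plan is to reduce the whole statement to one quantitative estimate on the coefficients of $\phi_j$: for each fixed $j\ge 1$ there are constants $C_j,\rho_j>0$ (depending only on $\sigma$, $\epsilon$, $a_{00}$ and $j$) such that
\begin{align*}
|a_{n(2n-2j+2)}|\ \le\ \frac{C_j\,\rho_j^{\,n}}{n!},\qquad n\ge j .
\end{align*}
Granting this, the proposition follows at once: after the substitution $n=j+k$ the modulus of the $n$-th term of $\phi_j$ is bounded by a constant multiple of $|x|^{2}|y|^{j+1/2}\big(M_j|x|^{2}|y|\big)^{k}/k!$ (with $M_j=\max(1,\rho_j)$, using $(j+k)!\ge k!$), so the series is dominated by $C_j\rho_j^{\,j}|x|^{2}|y|^{j+1/2}\exp\!\big(M_j|x|^{2}|y|\big)<\infty$ for every $x$ and $y$. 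Thus the real work is the coefficient bound.

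I would establish this bound by induction on $j$, working with the recursion \ref{a_nm} in the setting $\eta=0$ of proposition \ref{prop_trunc} (in which the $\phi_j$ were defined). For $j=1$, \ref{a_nm} with $m=2n-2$ collapses to $a_{n(2n)}=-a_{(n-1)(2n-2)}/(2n\sigma^{2})$, giving $|a_{n(2n)}|=|a_{00}|/\big((2\sigma^{2})^{n}n!\big)$, so one may take $C_1=|a_{00}|$ and $\rho_1=1/(2\sigma^{2})$. For the inductive step, fix $j\ge 2$, assume the bound for $\phi_1,\dots,\phi_{j-1}$, and in \ref{a_nm} put $m=2n-2j$; isolating the $l=1$ summand yields
\begin{align*}
\sigma^{2}\binom{2n-2j+2}{2}a_{n(2n-2j+2)}=\Big(\tfrac12-n\Big)a_{(n-1)(2n-2j)}-\sigma^{2}\sum_{l=2}^{j}\binom{2n-2j+2l}{2l}\epsilon^{2l-2}a_{n(2n-2j+2l)} .
\end{align*}
After dividing through, the first term carries the factor $\tfrac{2n-1}{\sigma^{2}(2n-2j+2)(2n-2j+1)}$, which is bounded by $A_j/n$ for a suitable $A_j$ (it is $O(1/n)$ and attains its maximum near $n=j$), multiplying $|a_{(n-1)(2n-2j)}|$ --- the $(n-1)$-st coefficient of $\phi_j$; each $l\ge 2$ summand involves $a_{n(2n-2j+2l)}$, the $n$-th coefficient of $\phi_{j-l+1}$ with $1\le j-l+1\le j-1$, controlled by the inductive hypothesis, and its combinatorial prefactor $\binom{2n-2j+2l}{2l}/\binom{2n-2j+2}{2}$ is a polynomial in $n$ of degree $2l-2$, which is absorbed into a slightly larger geometric ratio via $n^{p}\le Q_{p}\lambda^{n}$ for any fixed $\lambda>1$. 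Collecting terms produces a scalar inequality $u_n\le \tfrac{A_j}{n}u_{n-1}+B_j\tilde\rho_j^{\,n}/n!$ with $u_n:=|a_{n(2n-2j+2)}|$, finite initial value $u_j=|a_{j(2)}|$, and $A_j,B_j,\tilde\rho_j$ built (by taking maxima and sums over $j'=1,\dots,j-1$) from the lower-order data.

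Unwinding this recursion gives $u_n\le \tfrac{A_j^{\,n}}{n!}\big(\kappa_j+B_j\sum_{k\le n}(\tilde\rho_j/A_j)^{k}\big)$ for some $\kappa_j$ depending on $u_j$, and since the partial geometric sum is at most a polynomial in $n$ times $\max(1,(\tilde\rho_j/A_j)^{n})$, one final absorption of that polynomial delivers $u_n\le C_j\rho_j^{\,n}/n!$ and closes the induction. The main obstacle is not conceptual but the bookkeeping of this inductive step: one must check that the combinatorial ratios really are polynomial of the claimed degree, obtain the uniform bound on $\tfrac{2n-1}{\sigma^{2}(2n-2j+2)(2n-2j+1)}$ over $n\ge j$, and choose $A_j,B_j,\rho_j$ compatibly with the $j-1$ series already controlled, making sure each "polynomial times $\rho^{n}/n!$" absorption does not degrade the bound.
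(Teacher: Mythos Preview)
Your proposal is correct and follows essentially the same strategy as the paper's own proof: induction on $j$ through the recursion \eqref{a_nm} (with $\eta=0$), using the explicit Gaussian base case $|a_{n(2n)}|=|a_{00}|/((2\sigma^2)^n n!)$, and in the inductive step controlling the $l\ge 2$ summands by the hypothesis on $\phi_1,\dots,\phi_{j-1}$ while the same-$j$ contribution picks up the $O(1/n)$ factor $\tfrac{2n-1}{\sigma^2(2n-2j+2)(2n-2j+1)}$, which produces the factorial decay. Your write-up is in fact more careful than the paper's: you make explicit the polynomial growth (degree $2l-2$) of the binomial ratios and absorb it into a slightly larger geometric rate, and you actually unwind the scalar recursion $u_n\le \tfrac{A_j}{n}u_{n-1}+B_j\tilde\rho_j^{\,n}/n!$ to reach $u_n\le C_j\rho_j^{\,n}/n!$, whereas the paper simply asserts the resulting $O(e^n/n!)$ behaviour.
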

\begin{proof}
We write:
\begin{align*}
b_n^k&=a_{n(2n-k)}
\end{align*}
Note that:
\begin{align*}
b_n^0&=-\frac{1}{(2\sigma^2)^nn!}
\end{align*}
So that it is clear that the sequence: $b_n^0$ converges with $O(e^n/n!)$ as $n\rightarrow\infty$. We now assume that this also applies for $b_n^j$ for all $j\leq(k-1)$.

Now, from equation \ref{a_nm}, we have:
\begin{align*}
b_n^k=\bigg[\Big(\frac{1}{2}-n\Big)b_{n-1}^k-\sigma^2\sum_{l=2}^{k+1}\binom{2n-2k-2+2l}{2l}\epsilon^{2l-2}b_n^{k+1-l}\bigg]\bigg(\sigma^2\binom{2n-2}{2}\bigg)^{-1}
\end{align*}
In the summation, we have $k-1$ individual terms, which by assumption, each converge at least to $O(e^n/n!)$. For the first term, we have:
\begin{align*}
\frac{\Big(\frac{1}{2}-n\Big)b_{n-1}^k}{\sigma^2\binom{2n-2}{2}}&=-\frac{(2n-1)}{\sigma^2(2n-2)(2n-3)}b_{n-1}^k
\end{align*}
From which it follows that the $b_n^k$ term also converges with $O(e^n/n!)$. Since we have:
\begin{align*}
\sum_{n=1}^{\infty}y^nx^k\frac{e^n}{n!}
\end{align*}
converges for all $x,y$, it follows that the series:
\begin{align*}
\sqrt{y}\sum_{n=1}^{\infty}b_n^kx^ky^n
\end{align*}
converges, and that therefore: $\phi_j(x,y)$ converges in $n$ for all $x,y,j$.
\end{proof}
As mentioned in remark \ref{its_ok}, we now apply propositions \ref{prop_diverge} and \ref{prop_converge}, to show how to calculate upper bounds for $y$, based on the value for $K$, to ensure the series is convergent and the relative error remains small.
\begin{proposition}\label{pade}
For the series defined in proposition \ref{prop_trunc}, we have:
\begin{align}\label{prop_pade}
\frac{\phi_j(x,y)}{\phi_{j-1}(x,y)}&\approx c_1y+c_2x^2y^2+O(y^3)\\
c_1&=a_{j2}/a_{(j-1)2}\nonumber\\
c_2&=\frac{\big(a_{(j+1)4}-(a_{j2}/a_{(j-1)2})a_{j4}\big)}{a_{(j-1)2}}\nonumber
\end{align}
Therefore, to ensure that: $|\phi_j(x,y)|<\varepsilon|\phi_{j-1}(x,y)|$, we must have:
\begin{align}\label{conv_cond}
|c_1y+c_2y^2x^2|&<\varepsilon\\
c_1&=a_{j2}/a_{(j-1)2}\nonumber\\
c_2&=\frac{\big(a_{(j+1)4}-(a_{j2}/a_{(j-1)2})a_{j4}\big)}{a_{(j-1)2}}\nonumber
\end{align}
\end{proposition}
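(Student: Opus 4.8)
The plan is to expand the quotient $\phi_j/\phi_{j-1}$ as a power series in $y$ about $y=0$, treating $x$ as a fixed parameter, and to read off its first two coefficients. First I would write out the leading terms of the two series straight from the definition in proposition \ref{prop_trunc}: since $2n-2j+2$ runs through $2,4,6,\dots$ as $n=j,j+1,j+2,\dots$, we have
\begin{align*}
\phi_j(x,y)&=a_{j2}x^2y^{j+\frac12}+a_{(j+1)4}x^4y^{j+\frac32}+a_{(j+2)6}x^6y^{j+\frac52}+\cdots,\\
\phi_{j-1}(x,y)&=a_{(j-1)2}x^2y^{j-\frac12}+a_{j4}x^4y^{j+\frac12}+a_{(j+1)6}x^6y^{j+\frac32}+\cdots.
\end{align*}
By proposition \ref{prop_converge} each series converges for all $x,y$, so, for fixed $x$ and $y>0$, dividing out the leading monomial gives $\phi_j(x,y)=a_{j2}x^2y^{j+\frac12}g_j(y)$ and $\phi_{j-1}(x,y)=a_{(j-1)2}x^2y^{j-\frac12}g_{j-1}(y)$ with $g_j,g_{j-1}$ analytic near $y=0$ and $g_j(0)=g_{j-1}(0)=1$. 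Assuming, as is implicit in the claimed formulas, that the leading coefficients $a_{j2},a_{(j-1)2}$ are nonzero (which holds for $\epsilon\neq 0$ by the recursion of proposition \ref{prop_case1}), $g_{j-1}$ is nonvanishing on a neighbourhood of $0$, and hence $\phi_j/\phi_{j-1}=(a_{j2}/a_{(j-1)2})\,y\,g_j(y)/g_{j-1}(y)$ is analytic at $y=0$.

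Next I would expand the quotient. Writing $g_j(y)=1+A_1y+O(y^2)$ and $g_{j-1}(y)=1+B_1y+O(y^2)$ with $A_1=(a_{(j+1)4}/a_{j2})x^2$ and $B_1=(a_{j4}/a_{(j-1)2})x^2$, the geometric series gives $(1+B_1y+O(y^2))^{-1}=1-B_1y+O(y^2)$, hence
\begin{align*}
\frac{\phi_j(x,y)}{\phi_{j-1}(x,y)}&=\frac{a_{j2}}{a_{(j-1)2}}\,y\Big(1+(A_1-B_1)y+O(y^2)\Big)=c_1y+c_1(A_1-B_1)y^2+O(y^3),
\end{align*}
with $c_1=a_{j2}/a_{(j-1)2}$. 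A one-line simplification,
\begin{align*}
c_1(A_1-B_1)&=\frac{a_{j2}}{a_{(j-1)2}}\Big(\frac{a_{(j+1)4}}{a_{j2}}-\frac{a_{j4}}{a_{(j-1)2}}\Big)x^2=\frac{a_{(j+1)4}-(a_{j2}/a_{(j-1)2})a_{j4}}{a_{(j-1)2}}\,x^2=c_2x^2,
\end{align*}
then produces exactly \ref{prop_pade}.

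For the convergence criterion I would note that $|\phi_j(x,y)|<\varepsilon|\phi_{j-1}(x,y)|$ is equivalent to $|\phi_j/\phi_{j-1}|<\varepsilon$; inserting the expansion and discarding the $O(y^3)$ remainder gives the stated threshold $|c_1y+c_2y^2x^2|<\varepsilon$. To turn this into a genuine sufficient condition one observes that, for $y$ below some $x$-dependent bound, the neglected tail has modulus at most $\varepsilon/2$, so $|c_1y+c_2y^2x^2|<\varepsilon/2$ then forces $|\phi_j/\phi_{j-1}|<\varepsilon$; in applications one fixes $K$ and solves \ref{conv_cond} for the admissible range of $y$, i.e.\ a cut-off in $t=1/y$.

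I do not expect a substantive obstacle: the proposition is essentially a Taylor-coefficient computation for the ratio of two analytic functions. The only care needed is in invoking proposition \ref{prop_converge} to justify that the factored series $g_j,g_{j-1}$ are analytic in $y$ (so the expansion of the quotient and the $O(y^3)$ bound are legitimate), and in being explicit that \ref{conv_cond} is the practical post-truncation threshold rather than a literal necessary condition, notwithstanding the wording ``we must have''.
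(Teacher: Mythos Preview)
Your proposal is correct and follows essentially the same route as the paper: both expand the leading terms of $\phi_j$ and $\phi_{j-1}$, cancel the common factor $x^2y^{j-1/2}$, and compute the first two coefficients of the resulting quotient of power series in $y$. The paper frames this as inverting a Pad\'e approximant (cross-multiplying and matching coefficients of $y$ and $y^2$), whereas you factor out the leading monomial and invert via the geometric series; the algebra and the resulting $c_1,c_2$ are identical, and your version is slightly more careful about the analyticity hypothesis and the interpretation of \eqref{conv_cond} as a practical threshold.
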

\begin{proof}
We first write out the ratio of subsequent terms in the series:
\begin{align*}
\phi_j(x,y)/\phi_{j-1}(x,y)
\end{align*}
and invert the Pad{\'e} approximation technique outlined in \cite{Bender} section 8.3. We first write:
\begin{align*}
\frac{\phi_j(x,y)}{\phi_{j-1}(x,y)}&=\frac{\sum_{n=j}^{\infty}a_{n(2n-2j+2)}y^{n+0.5}x^{(2n-2j+2)}}{\sum_{n=j-1}^{\infty}a_{n(2n-2j+4}y^{n+0.5}x^{(2n-2j+4)}}\\
&=\frac{a_{j2}y^{j+0.5}x^2+a_{(j+1)4}y^{j+1.5}x^4+\dots}{a_{(j-1)2}y^{j-0.5}x^2+a_{j4}y^{j+0.5}x^4+\dots}
\end{align*}
We first divide through top \& bottom by $y^{j-0.5}$ to get:
\begin{align}\label{phi_ratio}
\frac{\phi_j(x,y)}{\phi_{j-1}(x,y)}&=\frac{\sum_{k=1}^{\infty}A_k(x)y^{k}}{\sum_{l=0}^{\infty}B_l(x)y^{l}}\\
A_k(x)&=a_{(j+k-1)(2k)}x^{2k}\nonumber\\
B_l(x)&=a_{(j+l-1)(2l+2)}x^{2l+2}\nonumber
\end{align}
We now equate the quotient \ref{phi_ratio}, to a power series in $y$:
\begin{align*}
\sum_{i=1}^{\infty}c_i(x)y^i&=\frac{\sum_{k=1}^{\infty}A_k(x)y^{k}}{\sum_{l=0}^{\infty}B_l(x)y^{l}}
\end{align*}
We can calculate the coefficients: $a_i$ by equating powers of $y$. We have:
\begin{align*}
\Big(\sum_{i=1}^{\infty}c_i(x)y^i\Big)\Big(\sum_{l=0}^{\infty}B_l(x)y^{l}\Big)&=\sum_{k=1}^{\infty}A_k(x)y^{k}
\end{align*}
So that:
\begin{align*}
c_1(x)B_0(x)&=A_1(x)\\
c_2(x)B_0(x)+c_1(x)B_1(x)&=A_2(x)
\end{align*}
From this we get:
\begin{align*}
c_1(x)&=a_{j2}/a_{(j-1)2}\\
c_2(x)&=\frac{\big(a_{(j+1)4}-(a_{j2}/a_{(j-1)2})a_{j4}\big)x^4}{a_{(j-1)2}x^2}
\end{align*}
So that for small $y$ we get:
\begin{align*}
\frac{\phi_j(x,y)}{\phi_{j-1}(x,y)}&\approx c_1y+c_2x^2y^2+O(y^3)\\
c_1&=a_{j2}/a_{(j-1)2}\text{, }c_2=\frac{\big(a_{(j+1)4}-(a_{j2}/a_{(j-1)2})a_{j4}\big)}{a_{(j-1)2}}
\end{align*}
\end{proof}
\section{Numerical Simulations}\label{NumSln}
\subsection{First Results with $\eta=0$:}
In this section, we truncate the trial solution power series to a maximum number of terms in $n$, as well as truncating the partial differential equation:
\begin{align}\label{trunc_trial}
p(x,t)&=\frac{a_{00}}{\sqrt{t}}+\sum_{n=1}^{N}\sum_{m=2(n-K+1)}^{2n}\frac{a_{nm}}{\sqrt{t}}\bigg(\frac{x^m}{t^n}\bigg)
\end{align}
Starting, with a value $N=100$, we plot the solutions for $K=1$ to $K=5$ (terms up to and including $\epsilon^{8}$).

First, figure \ref{1day} shows the 1 day solutions ($t=0.004$), with $\sigma=10\%$, $\epsilon=0.005$.
\begin{figure}
\includegraphics[scale=0.55]{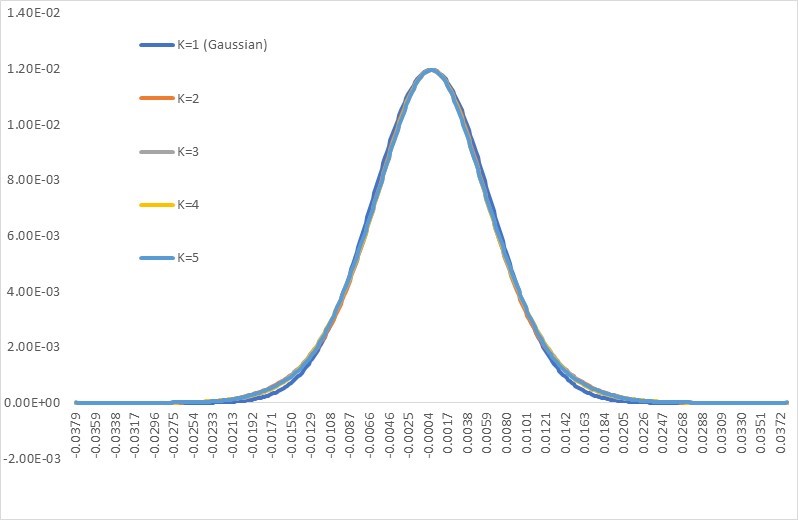}
\includegraphics[scale=0.55]{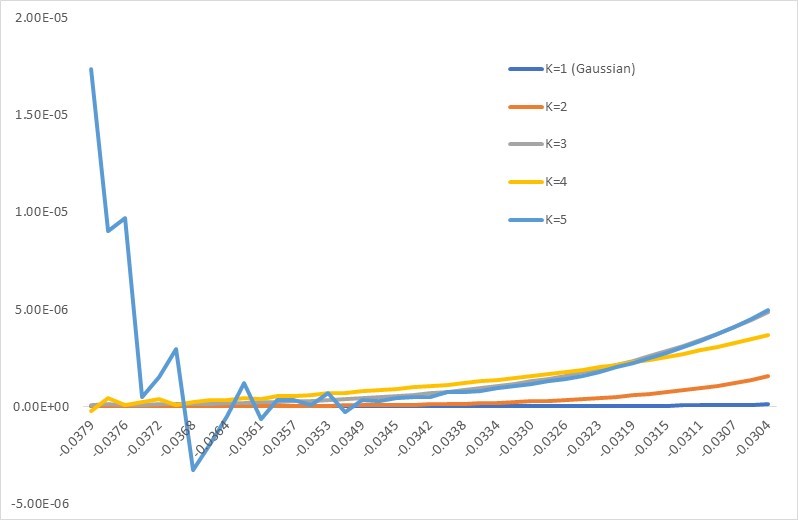}
\caption{Approximate solutions for $N=100$, $K=1$ to $K=5$, t=0.004. The first chart shows the full distribution, the next chart focuses on the left tail.}
\label{1day}
\end{figure}
Next, figure \ref{1m} shows the same solutions for after 1 month has elapsed.
\begin{figure}
\includegraphics[scale=0.55]{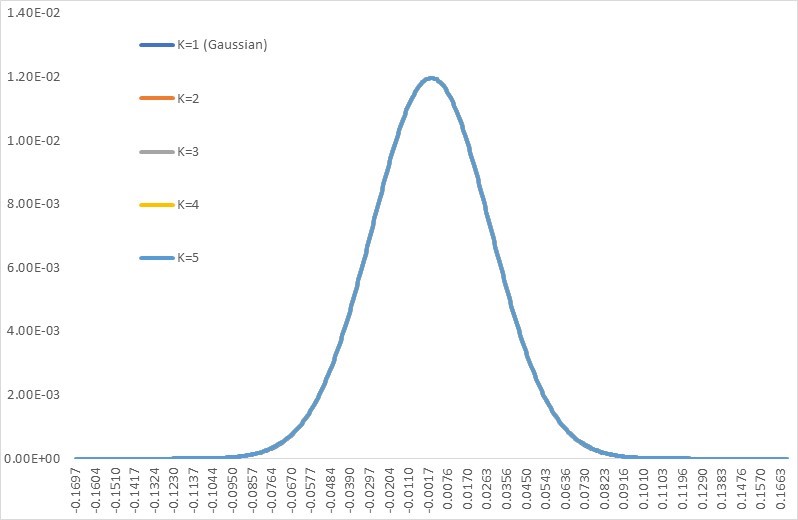}
\includegraphics[scale=0.55]{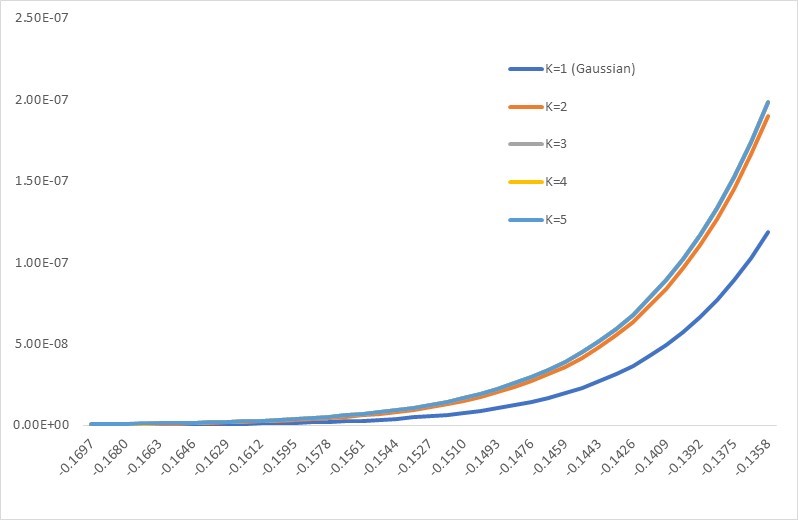}
\caption{Approximate solutions for $N=100$, $K=1$ to $K=5$, t=0.08}
\label{1m}
\end{figure}
\subsection{Convergence in N:}\label{conv in N}
Figure \ref{conv_3} shows the convergence in the tail, for $K=3$. This shows the series has converged for $k\geq 70$.
\begin{figure}
\includegraphics[scale=0.7]{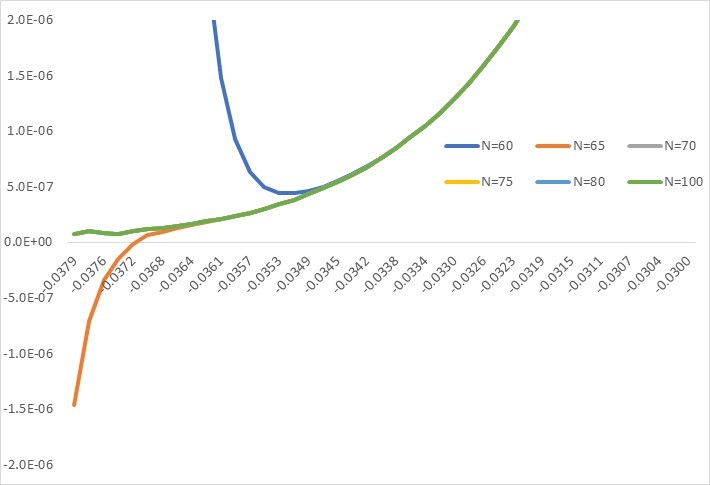}
\caption{Convergence of the tail probabilities, t=0.004, epsilon=0.005, $K=3$}
\label{conv_3}
\end{figure}
Similarly, figure \ref{conv_56} shows the convergence in the tail, for $K=5$.
\begin{figure}
\includegraphics[scale=0.7]{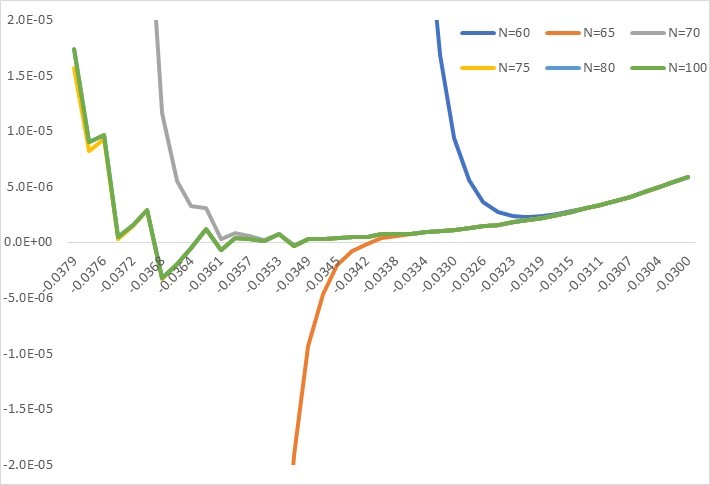}
\caption{Convergence of the tail probabilities, t=0.004, epsilon=0.005, $K=5$}
\label{conv_56}
\end{figure}
We note that, in this case, the power series has converged for $N\geq 75$. However, there is instability in tail for $K=5$, and above. As $K$ increases, the power-series coefficients get larger and larger, the final solution involves subtracting very large numbers from each other.

This is reflected in table \ref{mon_table} below, which shows the maximum value of the contributing monomials, and the ratio of the final sum to the maximum contributing monomial.

The values are taken at 6 standard deviations, and so the final sum of all monomials should be near zero. However, for $K=7$, this involves subtracting monomials with a value of $O\big(10^{+16}\big)$ from each other.

As the size of the individual monomials increases, the number of digits required to capture accuracy to $O\big(10^{-16}\big)$, increases. Thus, eventually the limitations of floating point arithmetic restrict the accuracy of the final result.

\begin{table}
\begin{tabular}{c|c|c}
$k$ ($x=6\sigma$, $t=0.004$)&Max Monomial&Final Sum/Max Monomial\\
\hline
$1$ (Gaussian)&$9.72\mathrm{e}{+7}$&$2.46\mathrm{e}{-15}$\\
$2$&$3.54\mathrm{e}{+9}$&$2.18\mathrm{e}{-15}$\\
$3$&$8.87\mathrm{e}{+10}$&$1.10\mathrm{e}{-15}$\\
$4$&$2.00\mathrm{e}{+12}$&$-1.37\mathrm{e}{-16}$\\
$5$&$4.19\mathrm{e}{+13}$&$5.49\mathrm{e}{-16}$\\
$6$&$8.31\mathrm{e}{+14}$&$3.71\mathrm{e}{-16}$\\
$7$&$1.70\mathrm{e}{+16}$&$2.66\mathrm{e}{-16}$
\end{tabular}
\caption{Table showing the maximum monimial size at value of $x$ within $\pm 6$ std deviations, and the ratio of final sum to the max monomial size.}\label{mon_table}
\end{table}
\subsection{Divergence in K:}\label{conv n K}
The analysis above shows that for fixed $K$, pending sufficient data retention in the floating point arithmetic used, one can use proposition \ref{prop_case1} to calculate solutions.

In this section, we show however that these series diverge for fixed $N$, as $K\rightarrow\infty$. This effect is exacerbated for large $\epsilon$. We show the results in figure \ref{div} below, for the mid-tail probabilities. We set $\epsilon=0.005$, $t=0.004$ (1 day), and $N=100$.
\begin{figure}
\includegraphics[scale=0.64]{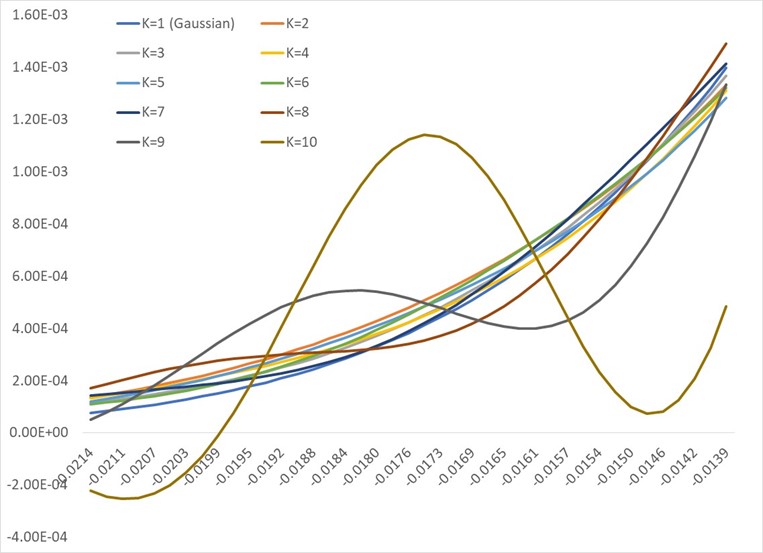}
\caption{Divergence of the mid-tail probabilities, t=0.004, $\epsilon=0.005$}
\label{div}
\end{figure}
Figure \ref{div2} shows the same model after a time frame of 1M has ellapsed. As time increases, the relative of impact of $\epsilon$ versus the total variance: $\sigma^2t$ reduces, and the probability distribution gets closer and closer to the Gaussian distribution. For $t=0.08$, the divergence seen in figure \ref{div} is no longer apparent.
\begin{figure}
\includegraphics[scale=0.6]{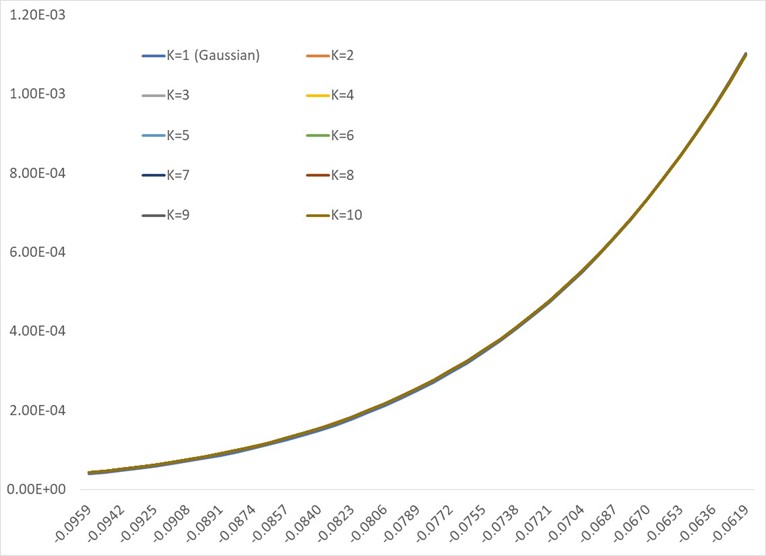}
\caption{Mid-tail probabilities, t=0.08, $\epsilon=0.005$}
\label{div2}
\end{figure}
\subsection{Results with $\eta\neq 0$:}
In figure \ref{skew}, we show the 1 day simulation from figure \ref{1day}: $\sigma=0.1$, $\epsilon=0.005$) with $\eta=0$ to  $\eta=-0.5$.
\begin{figure}
\includegraphics[scale=0.65]{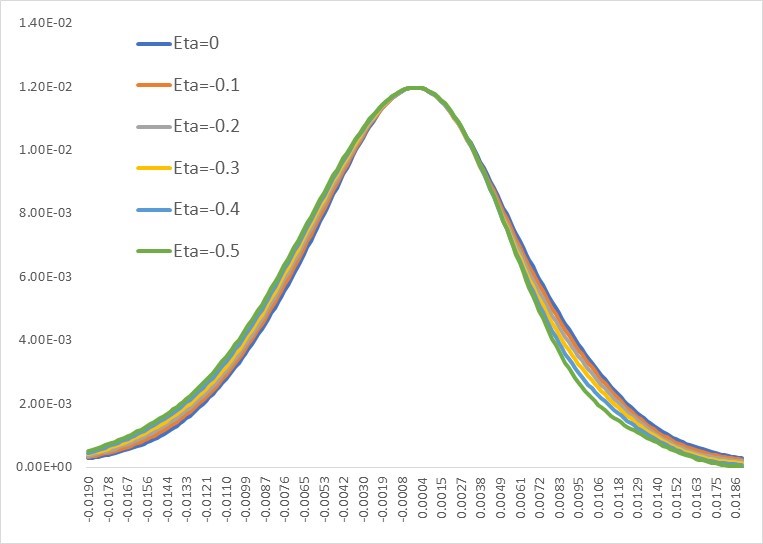}
\caption{Approximate solutions for $N=100$, $K=7$, t=0.004, $\epsilon=0.005$, $\sigma=0.1$, $\eta=0$ to $\eta=-0.5$.}
\label{skew}
\end{figure}
The negative skew parameter of $\eta=-0.5$, reflects the situation whereby the volume of sellers represented by $||\psi_o||^2$, is greater than the volume of buyers. We have:
\begin{align*}
||\psi_o||^2+||\psi_b||^2&=1\\
||\psi_o||^2-||\psi_b||^2&=-0.5
\end{align*}
\section{Application to the Modelling of Illiquid Stocks:}
\subsection{Modelling with `Fat Tails'}
First of all, we note that the solutions converge to the Gaussian distribution for small $\epsilon$, and/or long time frames $t$. In \cite{Hicks6}, it is shown that the second, third \& fourth central moments for the solution to the Fokker-Planck equation: \ref{the pde}, are given by:
\begin{align*}
\mu_2&=\sigma^2t\\
\mu_3&=\sigma^2t\epsilon\eta\\
\mu_4&=3(\sigma^2t)^2+\sigma^2t\epsilon
\end{align*}
Therefore, as $\epsilon/t\rightarrow 0$ the ratio of the kurtosis to the Gaussian kurtosis for a distribution with the same variance, tends to 1:
\begin{align*}
\frac{3(\sigma^2t)^2+\sigma^2t\epsilon}{3(\sigma^2t)^2}\rightarrow 1\text{, as }\frac{\epsilon}{t}\rightarrow 0
\end{align*}
Thus we can see that where the bid-offer spread disappears, and there are a number of buyers \& sellers willing to transact at the same price, the model yields a Gaussian solution. However, after the onset of illiquidity, represented by the fact that market sellers wish to charge a higher price than buyers are willing to pay, the result is higher kurtosis (ie `fat tails').

With this in mind, table \ref{fat_tails} shows the percentiles for $x$ values in excess of $3$, and $4$ standard deviations. The table shows that with $\epsilon=0.005$, the probability of a 1 day move in excess of 4 standard deviations is increased by a factor of $8$. Ie, 1 day every 17 years, rather than 1 day every 134 years.

By contrast, the  probability of a 1 month return in excess of 4 standard deviations is impacted to a much lower degree. In other words, as we look further and further into the future, the current market liquidity, reflected in the width of the bid-offer spread, impacts the likely distribution less.
\begin{table}[H]
\begin{tabular}{c|c|c|c|c|c}
Tail Event&$\epsilon$&$\sigma$&$t$&$K=0$ (Gaussian)&$K=4$\\
\hline
$-3$sd&$0.005$&$0.1$&$0.004$ (1 day)&$0.1374\%$&$0.2758\%$\\
$-4$sd&$0.005$&$0.1$&$0.004$ (1 day)&$0.0030\%$&$0.0240\%$\\
\hline
$-3$sd&$0.005$&$0.1$&$0.08$ (1M)&$0.1417\%$&$0.1577\%$\\
$-4$sd&$0.005$&$0.1$&$0.08$ (1M)&$0.0031\%$&$0.0042\%$
\end{tabular}
\caption{The table shows that with $K=4$, $t=1 day$, the probability of a 3 standard deviation event is 8 times that of the Gaussian distribution.}\label{fat_tails}
\end{table}
\subsection{Model Inaccuracy for Short Time Frames:}
Before applying the solution given by equations \ref{trial}, and proposition \ref{prop_case1}, it must be considered that, whilst this proposition may well represent a solution to the truncated partial differential equation, there is no guarantee that it will not differ substantially from the true solution, or even that it represents a valid probability density function for a stochastic process.

In fact, we can use proposition \ref{pade}, to estimate the minimum time frame for which we can apply the truncated series. For example, if we use a maximum of $K$ terms in the sequence, then in order to ensure $|\phi_{K+1}(x,y)|<\varepsilon|\phi_K(x,y)|$ we at least require:
\begin{align*}
|c_1y+c_2y^2x^2|&<\varepsilon\\
c_1&=a_{(K+1)2}/a_{K2}\\
c_2&=\frac{\big(a_{(K+2)4}-(a_{(K+1)2}/a_{K2})a_{(K+1)4}\big)}{a_{K2}}
\end{align*}
Thus by fixing the error tolerance ($\varepsilon$), we can calculate the maximum value of $y$ (where $y=1/t$) for which the model can be applied.

Alternatively, given a set time frame over which we wish to model, we can use proposition \ref{pade} to calculate how many terms it is safe to include.

With this in mind table \ref{error_table} shows the values of $c_1$ and $c_2$ under different values for $K$ and $\epsilon$.
\begin{table}[H]
\begin{tabular}{c|c|c|c|c}
$K$&$\epsilon$&$c_1$&$c_2$&Minimum $t$, $\varepsilon=5\%$\\
\hline
$1$&$0.005$&$0.0006$&$-0.0365$&$0.0125$\\
$2$&$0.005$&$0.0017$&$-0.1163$&$0.0333$\\
$3$&$0.005$&$0.0028$&$-0.2100$&$0.0562$\\
$4$&$0.005$&$0.0040$&$-0.3086$&$0.08$\\
$5$&$0.005$&$0.0052$&$-0.4093$&$0.1042$\\
\hline
$1$&$0.002$&$0.0001$&$-0.0058$&$0.0020$\\
$2$&$0.002$&$0.0003$&$-0.0186$&$0.0053$\\
$3$&$0.002$&$0.0005$&$-0.0336$&$0.0090$\\
$4$&$0.002$&$0.0006$&$-0.0494$&$0.0128$\\
$5$&$0.002$&$0.0010$&$-0.0655$&$0.0167$
\end{tabular}
\caption{The table shows an estimate for the minimum time over which we can model, assuming a relative error tolerance for higher order terms of 5\%}\label{error_table}
\end{table}
For small $x$ (ie we take $x\approx 0$), $\epsilon=0.005$, and $\varepsilon=5\%$, we find with $K=4$, the minimum modelling time is $0.08$, which equates to roughly 1 month. With $\epsilon=0.002$, we find with $K=4$, the minimum modelling time is 0.0128, which equates to roughly 3 days.

\bibliographystyle{amsplain}


\end{document}